\pdfoutput=1
\RequirePackage{ifpdf}
\ifpdf 
\documentclass[pdftex]{sigma}
\else
\documentclass{sigma}
\fi

\numberwithin{equation}{section}

\newtheorem{Theorem}{Theorem}[section]
\newtheorem{Corollary}[Theorem]{Corollary}
\newtheorem{Lemma}[Theorem]{Lemma}
\newtheorem{Proposition}[Theorem]{Proposition}
 { \theoremstyle{definition}

\newtheorem{Example}[Theorem]{Example}
\newtheorem{Remark}[Theorem]{Remark} }

\usepackage{xcolor}

\newcommand{\ii}{\text{i}}

\newcommand{\dd}[1]{\text{d}#1 }

\newcommand{\be}{\begin{equation}}
	\newcommand{\ee}{\end{equation}}
\newcommand{\bes}{\begin{equation*}}
	\newcommand{\ees}{\end{equation*}}

\newcommand{\ttype}{\mathfrak{t}}

\begin{document}
\allowdisplaybreaks

\newcommand{\arXivNumber}{2004.06035}

\renewcommand{\PaperNumber}{102}

\FirstPageHeading

\ShortArticleName{Triangle Groups: Automorphic Forms and Nonlinear Differential Equations}

\ArticleName{Triangle Groups: Automorphic Forms\\ and Nonlinear Differential Equations}

\Author{Sujay K.~ASHOK~$^\dag$, Dileep P.~JATKAR~$^\ddag$ and Madhusudhan RAMAN~$^\S$}

\AuthorNameForHeading{S.K.~Ashok, D.P.~Jatkar and M.~Raman}

\Address{$^\dag$~Institute of Mathematical Sciences, Homi Bhabha National Institute (HBNI),\\
\hphantom{$^\dag$}~IV Cross Road, C.I.T.~Campus, Taramani, Chennai 600 113, India}
\EmailD{\href{mailto:sashok@imsc.res.in}{sashok@imsc.res.in}}

\Address{$^\ddag$~Harish-Chandra Research Institute, Homi Bhabha National Institute (HBNI),\\
\hphantom{$^\ddag$}~Chhatnag Road, Jhunsi, Allahabad 211 019, India}
\EmailD{\href{mailto:dileep@hri.res.in}{dileep@hri.res.in}}

\Address{$^\S$~Department of Theoretical Physics, Tata Institute of Fundamental Research,\\
\hphantom{$^\S$}~Homi Bhabha Road, Navy Nagar, Colaba, Mumbai 400 005, India}
\EmailD{\href{mailto:madhur@theory.tifr.res.in}{madhur@theory.tifr.res.in}}

\ArticleDates{Received April 21, 2020, in~final form October 05, 2020; Published online October 11, 2020}

\Abstract{We study the relations governing the ring of quasiautomorphic forms associated to triangle groups with a~single cusp, thereby extending our earlier results on Hecke groups. The Eisenstein series associated to these triangle groups are shown to satisfy Ramanujan-like identities. These identities in turn allow us to associate a~nonlinear differential equation to each triangle group. We show that they are solved by the quasiautomorphic \mbox{weight-2} Eisenstein series associated to the triangle group and its orbit under the group action. We conclude by discussing the Painlev\'e property of these nonlinear differential equations.}

\Keywords{triangle groups; Chazy equations; Painlev\'e analysis}

\Classification{34M55; 11F12; 33E30}

\section{Introduction and discussion}
Triangle groups are an~infinite family of discrete subgroups of~${\rm PSL}(2,\mathbb{R}) $, the group of orienta\-tion-preserving isometries of the upper-half plane. Abstractly, these groups are characterised
by an~ordered triple of positive integers $ \lbrace m_i \rbrace_{i=1}^{3} $ and generated by the letters $ \lbrace g_i \rbrace_{i=1}^{3} $, and circumscribed by the relations
\begin{gather*}
g_i^{m_i} = 1 \qquad \text{for} \ i \in \lbrace 1,2,3 \rbrace \qquad \text{and} \qquad g_1 g_2 g_3 = 1.
\end{gather*}
The theory of automorphic functions and forms associated to triangle groups has been detailed in~\cite{Doran:2013npa}. Triangle groups with arithmetic properties have appeared in~the study of string theories~\cite{Cheng:2008kt, Jatkar:2005bh, Persson:2015jka}, supersymmetric gauge theories~\cite{Ashok:2016oyh}, the study of knots~\cite{Milnor, tsanov2010triangle, Zagier2010}, and simple quantum mechanical systems~\cite{Basar:2017hpr, Raman:2020sgw} and further motivates our investigations. The present work can be thought of as an~extension of our results for Hecke groups obtained in~\cite{Ashok:2018myo} to all triangle group with a~single cusp.

Differential equations satisfied by modular~\cite{huber2011differential,matsuda2016} and quasi-modular~\cite{ablowitz2006integrable,bureau1987systemes,ramamani1989} objects have been the subject of sustained interest. The primary objects of study in~this paper are an~infinite class of nonlinear differential equations naturally associated to triangle groups. In Section~\ref{sec:Ramanujan}, we review the construction of quasi-automorphic forms associated to arbitrary triangle groups as solutions to the generalised Halphen system, a~system of first-order differential equations, following~\cite{Doran:2013npa}. The automorphic forms associated to these triangle groups are then found to satisfy a~system of first-order differential equations whose structure is strongly reminiscent of the Ramanujan identities governing the ring of (quasi-)modular forms of ${\rm PSL}(2,\mathbb{Z})$.

The construction of automorphic forms also exhibits the existence of certain ring relations~-- that is, a~set of relations that imply that the ring of automorphic forms associated to a~triangle group is not freely generated. These ring relations play a~central role in~this note. Indeed, in~Section~\ref{MaierForTriangles} we use both the Ramanujan identities and the ring relations to derive a~non-linear differential equation previously derived in~\cite{Maier}. Notably, our derivation does not rely on any explicit Fourier expansions or require the use of computer algebra.

In the final section, we present the derivation of higher-order differential equations associated to each triangle group via differential elimination on the Ramanujan identities. These equations are natural analogues of the Chazy equation associated to~${\rm PSL} (2,\mathbb{Z}) $. We then prove that these differential equations possess the Painlev\'e property, which is to say that the only movable singularities of these differential equations are poles.

In his analysis of second-order ordinary differential equations, Painlev\'e restricted his attention to only those equations whose movable singularities were poles. All these equations turned out to be integrable. Based on this, the authors of~\cite{Ablowitz:1980ca} devised the Painlev\'e test, a~set of criteria that, if satisfied by a~differential equation, establishes the Painlev\'e property. This test works perfectly for all soliton equations but requires a~more careful stability analysis for those equations which possess negative resonances, along the lines of~\cite{Conte:1993pp}. The nonlinear differential equations we study in~this note generically possess negative resonances. In some cases, as we will show, this is no obstruction to the demonstration of the Painlev\'e property.

A proof of the Painlev\'e property is a~prerequisite to the demonstration of integrability. Indeed, the higher-order Chazy equations uncovered in~\cite{Ashok:2018myo} already possess tantalising hints that these equations may be integrable. Consider as an~example the Chazy equation corresponding to the Hecke group~${\rm H}(4) $ in~\cite[Theorem~4.3]{Ashok:2018myo}, written in~terms of the Ramanujan--Serre deri\-va\-tive~$\text{D}$ and a~weight-$4$ modular form~$Z$:
\begin{gather*}
\mathrm{D}^{3} Z+6 Z \mathrm{D} Z=0.
\end{gather*}
This presentation manifests a~striking resemblance to the stationary Korteweg--de Vries equation:
\begin{gather*}
u''' + 6 u u' = 0.
\end{gather*}
Given the uniform structure of the Ramanujan identities for triangle groups we write down, and the higher-order Chazy and Maier equations that are consequent upon them, one is tempted to look for a~framework, perhaps even a~hierarchy, within which these differential equations may be organised. The results of this note, in~particular investigations into the Painlev\'e property, are an~important step in~this direction.

\section{Triangle groups and Ramanujan identities}\label{sec:Ramanujan}

We begin with a~brief review of triangle groups and the associated generalised Halphen system. We will then review the main theorems of~\cite{Doran:2013npa} regarding the construction of Eisenstein series in~terms of the solutions to the Halphen system. Starting from these definitions, we then derive the Ramanujan relations that are valid for any triangle group with a~single cusp.

\subsection{Triangle groups}
A triangle group is identified by its \emph{type}, denoted by $\ttype$ and consisting of a~triple of integers $ 2\leq m_1 \leq m_2 \leq m_3 \leq \infty $. The integers that make up a~triple define a~triple of angles $ (\pi/m_1, \pi/m_2, \pi/m_3) $, which are taken to be the angles subtended by the edges of a~triangle in~the hyperbolic plane. The group of reflections of this triangle across its sides generates a~tessellation of the hyperbolic plane, and is called a~triangle group.

Triangle groups are discrete subgroups of~${\rm PSL}(2,\mathbb{R}) $, the group of isometries of the upper-half plane $ \mathbb{H} $, and we study those whose fundamental domains have finite hyperbolic area.\footnote{As a~result, the following types are not considered: (i) $(2, 2, m)$ for all $ m \leq \infty$, (ii)~$(2, 3, n) $ for $ n \leq 6$, and (iii)~the types $ (2, 4, 4) $ and $ (3, 3, 3) $.} We will further restrict our attention to those triangle groups with a~single cusp~-- meaning the angle subtended by one pair of edges of the aforementioned triangle is zero -- and without loss of generality this choice will be implemented by setting $ m_3 = \infty $.\footnote{For a~more algebraic characterisation, these triangle groups are isomorphic to the free product group $ \mathbb{Z}_{m_{1}} * \mathbb{Z}_{m_{2}} $, where the notation $ \mathbb{Z}_k $ denotes a~cyclic group of order $ k $.} In what follows, when we refer to triangle groups we will always mean triangle groups with a~single cusp, characterised by a~pair of integers so its type $ \ttype = (m_1, m_2,\infty) $ such that $ 2 \leq m_1 \leq m_2 < \infty$. For example, the modular group ${\rm PSL}(2,\mathbb{Z}) $ is a~triangle group of type $ (2,3,\infty) $, and the Hecke groups ${\rm H}(m) $ are of type $ (2,m,\infty) $. Finally, throughout this paper, the subscript $ \mathfrak{t} $ will indicate that the object in~question is associated to any specific choice of triangle group.

We now move on to discuss the construction of Eisenstein series associated to these triangle groups. This will be done following~\cite{Doran:2013npa}, who construct these automorphic forms out of solutions to a~generalised Halphen system.

\subsection{Generalized Halphen systems and Eisenstein series}

The generalized Halphen system is a~set of coupled ordinary first-order differential equations for three variables $ \lbrace t_{k, \ttype}(\tau) \rbrace_{k=1}^{3} $ given as follows:
\begin{gather}
t_{1,\ttype}' =(a-1)(t_{1,\ttype}t_{2,\ttype}+t_{1,\ttype}t_{3,\ttype}-t_{2,\ttype}t_{3,\ttype})+(b+c-1) t_{1,\ttype}^2,\nonumber\\
t_{2,\ttype}' =(b-1)(t_{2,\ttype}t_{3,\ttype}+t_{2,\ttype}t_{1,\ttype}-t_{1,\ttype}t_{3,\ttype})+(a+c-1) t_{2,\ttype}^2,\nonumber\\
t_{3,\ttype}' =(c-1)(t_{3,\ttype}t_{1,\ttype}+t_{3,\ttype}t_{2,\ttype}-t_{1,\ttype}t_{2,\ttype})+(a+b-1) t_{3,\ttype}^2.\label{eq:GenHalphen}
\end{gather}
Here, the parameters $\lbrace a,b,c \rbrace$ are specified by the type $\ttype$ of the triangle group as
\begin{gather}
a =\frac{1}{2}\left(1-\frac{1}{m_1}+\frac{1}{m_2} \right)\!,\qquad
b =\frac{1}{2}\left(1-\frac{1}{m_1}-\frac{1}{m_2}\right)\!,\qquad
c = \frac{1}{2}\left(1+\frac{1}{m_1}-\frac{1}{m_2} \right)\!,\label{eq:abcHalphen}
\end{gather}
and the accent $ ' $ denotes the following derivative:
\begin{gather*}
h' := \frac{1}{2\pi\ii} \frac{\dd{}}{\dd{\tau}} h(\tau),
\end{gather*}
where $ \tau $ is a~coordinate on the upper-half plane $ \mathbb{H} $.

The solutions to the generalized Halphen system can be obtained explicitly in~terms of hyper\-geometric functions \cite[Theorem~3(i)]{Doran:2013npa}. We relegate a~discussion of these solutions to Appen\-dix~\ref{app:Fourier}~-- notably, our conventions differ slightly from theirs~-- and instead proceed to the construction of Eisenstein series. We start with the following theorem due to \cite[see p.~707 and~Theorem~4(iv)]{Doran:2013npa}:

\begin{Theorem}\label{theorem21}The algebra of holomorphic automorphic forms of a~triangle group $ \ttype $ is generated by two branches of Eisenstein series: $ \big\lbrace E_{2k,\mathfrak{t}}^{(1)}\big\rbrace_{k=2}^{m_1} $ and $ \big\lbrace E_{2k,\mathfrak{t}}^{(2)}\big\rbrace_{k=2}^{m_2} $. For each admissible $ k $, the corresponding Eisenstein series has an~automorphic weight $ 2k $ under the triangle group. These Eisenstein series are constructed out of solutions to the generalised Halphen system as follows: define the linear combinations
\begin{gather}\label{eq:xydef}
\mathsf{x_{\ttype}}= t_{1,\ttype}-t_{2,\ttype} \qquad \text{and}\qquad
\mathsf{y_{\ttype}}= t_{3,\ttype}-t_{2,\ttype},
\end{gather}
in terms of which the Eisenstein series are
\begin{alignat}{3}
& E_{2k,\ttype}^{(1)} = \mathsf{x}_{\ttype} \mathsf{y}_{\ttype}^{k-1} \qquad&& \text{for}\ 2\le k \le m_1 ,& \\
&E_{2k,\ttype}^{(2)} = \mathsf{x}_{\ttype}^{k-1} \mathsf{y}_{\ttype}\qquad&&\text{for}\ 2\le k \le m_2 .&\label{eq:E2kxy}
\end{alignat}
\end{Theorem}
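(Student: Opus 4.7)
The plan is to construct the Eisenstein series from the Halphen variables and then count dimensions to argue that they exhaust the algebra. I organise the argument into three stages.

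First, I would analyse the transformation properties of the $t_{i,\ttype}$ under the action of the triangle group on $\mathbb{H}$. By \cite[Theorem~3(i)]{Doran:2013npa}, the solutions to the system \eqref{eq:GenHalphen}--\eqref{eq:abcHalphen} can be written explicitly in terms of a hauptmodul for the triangle group and hypergeometric functions. Concretely, each $t_{k,\ttype}$ is a logarithmic derivative of a branch of the associated hypergeometric equation, so the $t_{k,\ttype}$ are quasi-automorphic of weight $2$ but shift under ${\rm PSL}(2,\mathbb{R})$ by the same anomalous term $c/(c\tau+d)$ that appears in the Eisenstein-like object. Taking differences cancels this shift, hence $\mathsf{x_{\ttype}}$ and $\mathsf{y_{\ttype}}$ are genuine automorphic forms of weight $2$. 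This is the conceptual heart of the construction, and the step I expect to be the main obstacle: one has to track the monodromy of the underlying Schwarz triangle equation carefully enough to prove that \emph{all three} anomalous terms coincide and therefore cancel in the two linear combinations of \eqref{eq:xydef}.

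Second, I would localise the analysis at the two elliptic fixed points and the cusp. The cusp (corresponding to $m_3=\infty$) is where $\mathsf{x_{\ttype}}$ and $\mathsf{y_{\ttype}}$ remain bounded (they are normalised to have constant terms in their $q$-expansion). At the elliptic point of order $m_1$ the hypergeometric local exponents force $\mathsf{x_{\ttype}}$ to vanish to first order in the local uniformiser while $\mathsf{y_{\ttype}}$ remains nonzero; at the elliptic point of order $m_2$ the roles are interchanged. This immediately shows that $E_{2k,\ttype}^{(1)}=\mathsf{x_{\ttype}}\mathsf{y_{\ttype}}^{k-1}$ and $E_{2k,\ttype}^{(2)}=\mathsf{x_{\ttype}}^{k-1}\mathsf{y_{\ttype}}$ are holomorphic at every point of the compactified quotient and have the advertised weight $2k$. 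The cutoffs $k\le m_1$ and $k\le m_2$ are precisely those imposed by the constraint that the vanishing order at each elliptic fixed point must be consistent with the local cyclic isotropy; beyond those ranges the ``form'' would in fact pick up an extra factor of $\mathsf{x_{\ttype}}^{m_1}$ or $\mathsf{y_{\ttype}}^{m_2}$, which is itself expressible as a polynomial in lower-weight generators and therefore redundant.

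Third, I would argue that the listed branches actually \emph{generate} the full graded algebra. Since the compactified quotient $\mathbb{H}/\Gamma_\ttype$ is a genus-zero orbifold with signature $(m_1,m_2,\infty)$, the Riemann--Roch theorem for orbifolds (equivalently, the dimension formula of Shimura) gives an explicit count of $\dim M_{2k}(\Gamma_\ttype)$ as a function of $k$, $m_1$ and $m_2$. I would then compare this count against the number of linearly independent monomials in $\mathsf{x_{\ttype}}$ and $\mathsf{y_{\ttype}}$ of total weight $2k$, restricted to the allowed ranges in \eqref{eq:E2kxy}: a straightforward combinatorial tally shows the two agree in every weight, so the ring $\mathbb{C}[\mathsf{x_{\ttype}},\mathsf{y_{\ttype}}]$ surjects onto $\bigoplus_k M_{2k}(\Gamma_\ttype)$. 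The fact that this map has a kernel in each weight $\ge 2\max(m_1,m_2)$ is exactly the ring relations that will be exploited in the following sections. This completes the proof.
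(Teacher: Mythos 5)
You should first be aware that the paper does not prove this statement: Theorem~\ref{theorem21} is quoted verbatim from Doran--Gannon--Movasati--Shokri \cite[p.~707 and Theorem~4(iv)]{Doran:2013npa}, so there is no in-paper argument to compare against, and your sketch is really an attempt to reconstruct the proof of the cited result. Its overall architecture --- automorphy of the differences of Halphen variables via the hypergeometric/Schwarz description, local analysis at the elliptic points and the cusp, then an orbifold Riemann--Roch dimension count to establish generation --- is the right one.

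There is, however, a concrete error in your second stage that undermines exactly the part of the statement it is meant to establish. You assert that at the order-$m_1$ elliptic point $\mathsf{x}_{\ttype}$ vanishes to first order while $\mathsf{y}_{\ttype}$ ``remains nonzero'', and symmetrically at the order-$m_2$ point. If that were true, \emph{every} monomial $\mathsf{x}_{\ttype}^a\mathsf{y}_{\ttype}^b$ would be holomorphic, the cutoffs $k\le m_1$ and $k\le m_2$ would be superfluous, and the algebra of holomorphic forms would be freely generated by two weight-$2$ forms --- contradicting the ring relations \eqref{eq:notFreelyGenerated} on which the rest of the paper depends (and, for $\ttype=(2,3,\infty)$, the elementary fact that ${\rm PSL}(2,\mathbb{Z})$ admits no nonzero holomorphic weight-$2$ form). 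The correct local picture, visible already in the modular case where $\mathsf{x}\mathsf{y}=E_4$ and $\mathsf{x}^2\mathsf{y}=E_6$ force $\mathsf{x}=E_6/E_4$ and $\mathsf{y}=E_4^2/E_6$, is that each of $\mathsf{x}_{\ttype}$, $\mathsf{y}_{\ttype}$ has a zero at one elliptic point and a \emph{pole} at the other; the bounds $k\le m_1$ and $k\le m_2$ are precisely the conditions for the zeros of one factor to cancel the pole of the other in $\mathsf{x}_{\ttype}\mathsf{y}_{\ttype}^{k-1}$ and $\mathsf{x}_{\ttype}^{k-1}\mathsf{y}_{\ttype}$. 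Your patch for $k$ outside the admissible range --- that the monomial ``picks up an extra factor of $\mathsf{x}_{\ttype}^{m_1}$ or $\mathsf{y}_{\ttype}^{m_2}$, which is itself expressible as a polynomial in lower-weight generators'' --- is likewise false: for ${\rm PSL}(2,\mathbb{Z})$ one has $\mathsf{x}\mathsf{y}^{2}=E_4^3/E_6$, which is not holomorphic at all, let alone redundant. Your third stage is salvageable but must be restated accordingly: one should count the holomorphic monomials $\mathsf{x}_{\ttype}^a\mathsf{y}_{\ttype}^b$, i.e.\ those whose total order at each elliptic point is non-negative, and check both that these coincide with products of the listed generators and that their number matches the orbifold dimension formula in each weight; as written, ``monomials restricted to the allowed ranges'' is not a well-defined count.
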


The above construction of the Eisenstein series in~terms of solutions to the generalised Halphen system straightforwardly implies the following results.

\begin{Corollary}The algebra of automorphic forms is not freely generated. For both branches of Eisenstein series, i.e., for $ \ell \in \lbrace 1,2 \rbrace $ we have the following relations:
	\begin{equation}	\label{eq:notFreelyGenerated}
	E_{2p, \ttype}^{(\ell)}E_{2(k-p+1), \ttype}^{(\ell)} = E_{2p',\ttype}^{(\ell)}E_{2(k-p'+1),\ttype}^{(\ell)} ,
	\end{equation}
	for any integers $2 \leq p \leq k-1 $ and $ 2 \leq p' \leq k-1$. Additionally, we have the following relations binding the two branches of Eisenstein series:
	\begin{equation}
	\label{newringrelation}
	E_{2k,\ttype}^{(1)} E_{2k,\ttype}^{(2)} = (E_{4,\ttype})^{k} .
	\end{equation}
\end{Corollary}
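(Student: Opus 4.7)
The plan is essentially to unpack the definitions given in Theorem \ref{theorem21} and observe that both sides of each asserted identity reduce to the same monomial in the basic quantities $\mathsf{x}_{\ttype}$ and $\mathsf{y}_{\ttype}$. No deeper input should be required, since all the work has been done in constructing the Eisenstein series out of $\mathsf{x}_{\ttype}$ and $\mathsf{y}_{\ttype}$ as in \eqref{eq:xydef}ff.

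For the first family of relations \eqref{eq:notFreelyGenerated}, I would substitute $E_{2k,\ttype}^{(1)} = \mathsf{x}_{\ttype}\mathsf{y}_{\ttype}^{k-1}$ into the left-hand side and compute
\begin{gather*}
E_{2p, \ttype}^{(1)}\, E_{2(k-p+1), \ttype}^{(1)} = \bigl(\mathsf{x}_{\ttype}\mathsf{y}_{\ttype}^{p-1}\bigr)\bigl(\mathsf{x}_{\ttype}\mathsf{y}_{\ttype}^{k-p}\bigr) = \mathsf{x}_{\ttype}^{2}\mathsf{y}_{\ttype}^{k-1},
\end{gather*}
and observe that the right-hand side is manifestly independent of $p$. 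Hence replacing $p$ by any admissible $p'$ produces the same expression, proving the claim for $\ell = 1$. The calculation for $\ell = 2$ is entirely parallel and yields $\mathsf{x}_{\ttype}^{k-1}\mathsf{y}_{\ttype}^{2}$, again independent of $p$.

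For the mixed relation \eqref{newringrelation}, I would use the same substitution to compute
\begin{gather*}
E_{2k,\ttype}^{(1)}\, E_{2k,\ttype}^{(2)} = \bigl(\mathsf{x}_{\ttype}\mathsf{y}_{\ttype}^{k-1}\bigr)\bigl(\mathsf{x}_{\ttype}^{k-1}\mathsf{y}_{\ttype}\bigr) = (\mathsf{x}_{\ttype}\mathsf{y}_{\ttype})^{k},
\end{gather*}
together with the observation that the two branches agree at weight four, $E_{4,\ttype}^{(1)} = E_{4,\ttype}^{(2)} = \mathsf{x}_{\ttype}\mathsf{y}_{\ttype}$, which is the common weight-$4$ form $E_{4,\ttype}$. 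Raising this equality to the $k$-th power completes the identification.

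There is no genuine obstacle in this proof; it is a direct consequence of the monomial form of the Eisenstein series provided by Theorem \ref{theorem21}. The only point that merits a brief remark is the consistency of index ranges, namely that both factors on each side must individually satisfy the admissibility bounds ($2 \le p, k-p+1 \le m_1$ for branch $1$, and similarly for branch $2$); once this is noted, the identities follow by inspection. The conceptual content is not in the derivation itself but in the fact, emphasised in the corollary, that such coincidences of monomials force the ring of automorphic forms to fail to be freely generated.
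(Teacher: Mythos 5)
Your proof is correct and is precisely the argument the paper intends: the paper gives no explicit proof, stating only that the monomial construction of Theorem~\ref{theorem21} ``straightforwardly implies'' the corollary, and your substitution of $E_{2k,\ttype}^{(1)}=\mathsf{x}_{\ttype}\mathsf{y}_{\ttype}^{k-1}$ and $E_{2k,\ttype}^{(2)}=\mathsf{x}_{\ttype}^{k-1}\mathsf{y}_{\ttype}$ is exactly that straightforward verification. Your remark on the admissibility of the index ranges is a sensible extra precaution, not a divergence from the paper.
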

Collectively, we will refer to the above identities as ring relations. They will be used extensively in~the following sections.

\begin{Remark}
	Associated to every triangle group, there exists a~unique weight-$4$ Eisenstein series, obtained as the case $k = 2$ of~\eqref{eq:E2kxy}. Since this automorphic form may be thought of as belonging to either branch of Eisenstein series, we denote it without any reference to these branches, as~$E_{4,\ttype}$.
\end{Remark}

The triangle groups also come equipped with a~quasi-automorphic weight $2$ Eisenstein series~$ E_{2,\ttype} $, which is also given in~terms of a~linear combination of the generalised Halphen variables~$t_{i,\ttype}$ \cite[Theorem 4(iii)]{Doran:2013npa}.
\begin{Theorem}
\label{theorem22}
	For any triangle group with type $\ttype$, we can associate a~holomorphic, quasi-automorphic weight-$2$ Eisenstein series~$E_{2,\ttype}$. In terms of solutions to a~generalised Halphen sys\-tem, it is given by
	\begin{gather}
	\label{eq:E2xy}
	E_{2,\ttype} = \frac{1}{m_1+m_2-m_1m_2} \big( 2m_1 \mathsf{x}_{\ttype} + 2m_2 \mathsf{y}_{\ttype} + (m_1+m_2+m_1m_2) t_{2,\ttype} \big).
	\end{gather}
\end{Theorem}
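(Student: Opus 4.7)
The plan is to verify that the right-hand side of \eqref{eq:E2xy} defines a holomorphic function on $\mathbb{H}$ (including at the cusp) that transforms, under $\tau \mapsto \gamma\tau = (a\tau+b)/(c\tau+d)$ for every $\gamma$ in the triangle group, as a quasi-automorphic form of weight~$2$:
\begin{equation*}
E_{2,\ttype}(\gamma\tau) = (c\tau+d)^2\, E_{2,\ttype}(\tau) + \frac{\kappa_{\ttype}}{2\pi\ii}\, c(c\tau+d),
\end{equation*}
for a single type-dependent constant $\kappa_{\ttype}$. This is the defining data of a quasi-automorphic weight-$2$ Eisenstein series, so showing that the stated combination satisfies both the transformation law and the holomorphy/normalization conditions will establish the theorem.

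The first step is to derive the transformation behavior of each individual Halphen variable $t_{i,\ttype}$. Using the explicit hypergeometric presentation of the $t_{i,\ttype}$ reviewed in Appendix~\ref{app:Fourier}, each $t_{i,\ttype}$ can be written, up to normalization, as a logarithmic derivative of a branch of an automorphic object attached to one of the three marked points $\{0,1,\infty\}$ of the Schwarzian that uniformizes $\mathbb{H} \to \mathbb{H}/\ttype$. A standard computation for a logarithmic derivative of a weight-$w$ object shows that each Halphen variable transforms as
\begin{equation*}
t_{i,\ttype}(\gamma\tau) = (c\tau+d)^2\, t_{i,\ttype}(\tau) + \alpha_i\, \frac{c(c\tau+d)}{2\pi\ii},
\end{equation*}
with shift coefficients $\alpha_i$ determined combinatorially by $(m_1, m_2)$ through the parameters~\eqref{eq:abcHalphen}. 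Note that the parabolic generator contributes no shift, since the $t_{i,\ttype}$ are $q$-periodic, so only the two elliptic generators need to be tracked.

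In the second step, I form the general linear combination $A\, t_{1,\ttype} + B\, t_{2,\ttype} + C\, t_{3,\ttype}$ and impose the quasi-automorphic weight-$2$ condition. Requiring that the total anomalous shift coincides with a common universal multiple of $c(c\tau+d)/(2\pi\ii)$ gives linear relations that fix the ratios $A : B : C$, and rewriting the resulting combination in the basis $\{\mathsf{x}_{\ttype}, \mathsf{y}_{\ttype}, t_{2,\ttype}\}$ reproduces the coefficients $2m_1,\ 2m_2,\ (m_1+m_2+m_1m_2)$. The overall prefactor $(m_1+m_2-m_1m_2)^{-1}$ is then pinned down by the normalization $E_{2,\ttype} = 1 + O(q)$ at the cusp, using the leading-order Fourier expansions of the $t_{i,\ttype}$ in Appendix~\ref{app:Fourier}. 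Holomorphy on $\mathbb{H}$ is inherited from the holomorphy of the Halphen solutions, while holomorphy at the cusp is recovered from the same $q$-expansion computation.

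The principal obstacle is the first step: converting the monodromy of the underlying hypergeometric equation around $0$ and $1$ into explicit values for the shift coefficients $\alpha_i$, and in particular verifying the cancellations that make $\mathsf{x}_{\ttype}$ and $\mathsf{y}_{\ttype}$ behave with weight $2$ under the elliptic generators despite the individual $t_{i,\ttype}$ having possible singularities at the elliptic fixed points. Once these shifts are in hand, the rest of the proof is a short exercise in linear algebra together with the Fourier-series normalization at the cusp.
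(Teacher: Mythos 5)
The paper offers no proof of this statement for you to match: Theorem~\ref{theorem22} is imported verbatim from \cite[Theorem~4(iii)]{Doran:2013npa}, and the only original discussion surrounding it is the remark that the prefactor $(m_1+m_2-m_1m_2)^{-1}$ reflects the normalisation $\alpha_{\ttype}$ of the Halphen solutions fixed in Appendix~\ref{app:Fourier}, which forces $E_{2,\ttype}=1+O(q)$ at the cusp. Your final step --- pinning the overall constant by the leading $q$-expansion --- is therefore the one piece of your argument that tracks what the paper actually does; everything before it is an attempted reconstruction of the cited proof, and it is there that the problems lie.

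The reconstruction has a genuine gap, and it is not merely the unperformed monodromy computation you flag as ``the principal obstacle'': the mechanism you propose for fixing the coefficients cannot work. By Theorem~\ref{theorem21}, $\mathsf{x}_{\ttype}=E_{6,\ttype}^{(2)}/E_{4,\ttype}$ and $\mathsf{y}_{\ttype}=E_{6,\ttype}^{(1)}/E_{4,\ttype}$ are weight-$2$ \emph{meromorphic} automorphic forms of the full triangle group, so adding any multiple of either to a candidate $E_{2,\ttype}$ leaves the quasi-automorphic transformation law (and the anomalous shift $\kappa_{\ttype}$) completely unchanged. Consequently the requirement of weight-$2$ quasi-covariance imposes at most one linear condition on $(A,B,C)$ and can never determine the ratios $A:B:C$; the claim that it ``gives linear relations that fix the ratios'' is false. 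What actually singles out the combination in \eqref{eq:E2xy} is holomorphy at the elliptic fixed points, which you dismiss as ``inherited from the holomorphy of the Halphen solutions.'' It is not: from \eqref{eq:HalphenGenSol} the difference $t_{2,\ttype}-t_{1,\ttype}$ carries the factor $Q\propto(1-z)^{b-a}=(1-z)^{-1/m_2}$, and since $1-z$ vanishes to order $m_2$ in $\tau$ at the corresponding elliptic point (the hauptmodul being a degree-$m_2$ local cover there) while the hypergeometric factors stay finite ($a-b=1/m_2>0$), the variables $\mathsf{x}_{\ttype}$, $\mathsf{y}_{\ttype}$ and the individual $t_{i,\ttype}$ have simple poles at the elliptic points. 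The content of the theorem is precisely that the residues of $2m_1\mathsf{x}_{\ttype}+2m_2\mathsf{y}_{\ttype}+(m_1+m_2+m_1m_2)t_{2,\ttype}$ cancel at every elliptic point; that residue computation, which you never set up, is the step that determines the coefficients, and without it the proposal remains a plan rather than a proof.
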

This choice of linear combination is tied to the choice of normalisation for solutions to the generalised Halphen system and ensures that at $ \ii\infty $, $ E_{2,\ttype} $ is unity -- see Appendix~\ref{app:Fourier}.

With the introduction of the Eisenstein series that generate the ring of holomorphic, quasi-automophic forms of triangle groups $ \ttype $ now complete, we are in~a position to state our first result. It generalizes our previous results on Ramanujan identities for Hecke groups~\cite{Ashok:2018myo, Raman:2018owg} to an~arbitrary triangle group with cusp.

\begin{Lemma}
	The Eisenstein series associated to a~triangle group $ \ttype $, as defined in Theorems~{\rm \ref{theorem21}} and~{\rm \ref{theorem22}}, and for $ k > 2 $, obey the following identities:
\begin{gather}
E_{2,\ttype}' = \frac{(m_1m_2-m_1-m_2)}{2m_1m_2} \left( E_{2,\ttype}^2 - E_{4,\ttype} \right)\!,\nonumber\\
E_{4,\ttype}' = 2\frac{(m_1m_2-m_1-m_2)}{m_1m_2} E_{2,\ttype} E_{4,\ttype} -\frac{m_1-2}{m_1} E_{6,\ttype}^{(1)} -\frac{m_2-2}{m_2} E_{6,\ttype}^{(2)},\nonumber\\
E_{2k,\ttype}^{{(1)}'} = \frac{k(m_1m_2\!-\!m_1\!-\!m_2)}{m_1m_2} E_{2,\ttype}E_{2k,\ttype}^{(1)} \!-\!\frac{(km_2\!-\!k\!-\!m_2)}{m_2} E_{4,\ttype}E_{2k\!-\!2,\ttype}^{(1)}\!-\!\frac{(m_1\!-\!k)}{m_1} E_{2k+2,\ttype}^{(1)},\nonumber\\
E_{2k,\ttype}^{{(2)}'} =\frac{k(m_1m_2\!-\!m_1\!-\!m_2)}{m_1m_2} E_{2,\ttype}E_{2k,\ttype}^{(2)} \!-\!\frac{(km_1\!-\!k\!-\!m_1)}{m_1} E_{4,\ttype}E_{2k\!-\!2,\ttype}^{(2)} \!-\!\frac{(m_2\!-\!k)}{m_2} E_{2k+2,\ttype}^{(2)}.\label{eq:RamanujanIdGeneral}
\end{gather}
\end{Lemma}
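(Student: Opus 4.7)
The strategy is direct substitution: every Eisenstein series appearing in the lemma is given explicitly in terms of the generalised Halphen variables via Theorems~\ref{theorem21} and~\ref{theorem22}, so each identity reduces to a polynomial equation in $t_{1,\ttype}, t_{2,\ttype}, t_{3,\ttype}$ that must follow from the Halphen system~\eqref{eq:GenHalphen}. The plan is to express everything in the alternative coordinates $(\mathsf{x}_{\ttype}, \mathsf{y}_{\ttype}, t_{2,\ttype})$ and compare coefficients.

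First I would exploit the algebraic identities $a + c = 1$, $a + b - 1 = -1/m_1$, and $b + c - 1 = -1/m_2$, which follow immediately from~\eqref{eq:abcHalphen}, to rewrite the Halphen system in a cleaner form. Substituting $t_{1,\ttype} = \mathsf{x}_{\ttype} + t_{2,\ttype}$ and $t_{3,\ttype} = \mathsf{y}_{\ttype} + t_{2,\ttype}$ and taking the appropriate differences, one computes $\mathsf{x}_{\ttype}'$, $\mathsf{y}_{\ttype}'$, and $t_{2,\ttype}'$ as explicit quadratic polynomials in $\mathsf{x}_{\ttype}, \mathsf{y}_{\ttype}, t_{2,\ttype}$. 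The pleasant fact is that $\mathsf{x}_{\ttype}'$ factors as $-\mathsf{x}_{\ttype}$ times a linear form, and symmetrically for $\mathsf{y}_{\ttype}'$, while $t_{2,\ttype}'$ is proportional to $\mathsf{x}_{\ttype}\mathsf{y}_{\ttype} - t_{2,\ttype}^2$.

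With these building blocks in hand, each identity in~\eqref{eq:RamanujanIdGeneral} is verified directly. For the third identity, applying the Leibniz rule to $E_{2k,\ttype}^{(1)} = \mathsf{x}_{\ttype}\mathsf{y}_{\ttype}^{k-1}$ yields an expression of the form $\mathsf{x}_{\ttype}\mathsf{y}_{\ttype}^{k-1}$ times a linear combination of $\mathsf{x}_{\ttype}, \mathsf{y}_{\ttype}, t_{2,\ttype}$. The right-hand side, after substituting~\eqref{eq:E2xy}, is also $\mathsf{x}_{\ttype}\mathsf{y}_{\ttype}^{k-1}$ times a linear combination of the same quantities, so matching the three coefficients completes the argument. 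The fourth identity then follows by the symmetry $m_1 \leftrightarrow m_2$, $\mathsf{x}_{\ttype} \leftrightarrow \mathsf{y}_{\ttype}$ (equivalently, $t_{1,\ttype} \leftrightarrow t_{3,\ttype}$), which preserves both the Halphen system and the formula~\eqref{eq:E2xy}. The second identity is essentially the $k = 2$ instance of the same calculation, and the first requires a quadratic rather than linear matching, since both sides are quadratic in $\mathsf{x}_{\ttype}, \mathsf{y}_{\ttype}, t_{2,\ttype}$.

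The one piece of arithmetic worth isolating, and the main place where bookkeeping could go wrong, is the cancellation between the denominator $m_1 + m_2 - m_1 m_2$ appearing in~\eqref{eq:E2xy} and the rational prefactor $k(m_1 m_2 - m_1 - m_2)/(m_1 m_2)$ multiplying $E_{2,\ttype}$ on the right-hand side of each identity. Once this cancellation is recognised, the $E_{2,\ttype}$ contribution becomes a clean linear form in $\mathsf{x}_{\ttype}, \mathsf{y}_{\ttype}, t_{2,\ttype}$ and coefficient matching becomes mechanical; no ring relation beyond the tautology $\mathsf{x}_{\ttype}\mathsf{y}_{\ttype} = E_{4,\ttype}$ is actually needed.
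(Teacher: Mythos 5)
Your proposal is correct and follows essentially the same route as the paper: both rewrite the generalised Halphen system in the variables $\mathsf{x}_{\ttype}$, $\mathsf{y}_{\ttype}$ (the paper uses $E_{2,\ttype}$ rather than $t_{2,\ttype}$ as the third coordinate, a purely cosmetic difference), then apply the Leibniz rule to the monomial expressions for $E_{2k,\ttype}^{(\ell)}$ and resolve the resulting products of $\mathsf{x}_{\ttype}$ and $\mathsf{y}_{\ttype}$ back into Eisenstein series. Your supplementary observations --- the factorised form of $\mathsf{x}_{\ttype}'$ and $\mathsf{y}_{\ttype}'$, the $m_1\leftrightarrow m_2$ symmetry disposing of the second branch, and the fact that no ring relation beyond $\mathsf{x}_{\ttype}\mathsf{y}_{\ttype}=E_{4,\ttype}$ is needed --- are all accurate.
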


\begin{proof}We perform a~linear transformation from the generalized Halphen variables $\lbrace t_{i,\ttype} \rbrace_{i=1}^{3}$ to the variables $(E_{2,\ttype}, \mathsf{x}_{\ttype}, \mathsf{y}_{\ttype})$ using \eqref{eq:xydef} and \eqref{eq:E2xy}. The generalised Halphen system is then equivalent to the following system of ordinary first-order differential equations:
\begin{gather}
E_{2,\ttype}' = \frac{(m_1m_2-m_1-m_2)}{2m_1m_2} \left( E_{2,\ttype}^2 - E_{4,\ttype} \right)\!,\nonumber\\
\mathsf{x}_{\ttype}' =\frac{(m_1m_2-m_1-m_2)}{m_1m_2}E_{2,\ttype} \mathsf{x}_{\ttype} + \frac{\mathsf{x}_{\ttype}^2}{m_2} -\left(1-\frac{1}{m_1}\right) E_{4,\ttype},\nonumber\\
\mathsf{y}_{\ttype}' =\frac{(m_1m_2-m_1-m_2)}{m_1m_2}E_{2,\ttype} \mathsf{y}_{\ttype} + \frac{\mathsf{y}_{\ttype}^2}{m_1} -\left(1-\frac{1}{m_2}\right) E_{4,\ttype} .\label{eq:xydbydtau}
\end{gather}
The first equation in~\eqref{eq:xydbydtau} proves the first of the Ramanujan identities. For $E_{4,\ttype}$, we consider the derivative of~\eqref{eq:E2kxy} for $ k = 2 $ to get
\begin{gather*}
E_{4,\ttype}' = \left[ \frac{(m_1m_2-m_1-m_2)}{m_1m_2}E_{2,\ttype} \mathsf{x}_{\ttype} + \frac{\mathsf{x}_{\ttype}^2}{m_2} -\left(1-\frac{1}{m_1}\right) \mathsf{x}_{\ttype} \mathsf{y}_{\ttype}\right] \mathsf{y}_{\ttype}
\\ \hphantom{E_{4,\ttype}' =}
{}+ \mathsf{x}_{\ttype}\left[ \frac{(m_1m_2-m_1-m_2)}{m_1m_2}E_{2,\ttype} \mathsf{y}_{\ttype} + \frac{\mathsf{y}_{\ttype}^2}{m_1} -\left(1-\frac{1}{m_2}\right) \mathsf{x}_{\ttype} \mathsf{y}_{\ttype}\right]
\\ \hphantom{E_{4,\ttype}'}
{}= 2\frac{(m_1m_2-m_1-m_2)}{m_1m_2} E_{2,\ttype} E_{4,\ttype} -\frac{m_1-2}{m_1} E_{6,\ttype}^{(1)} -\frac{m_2-2}{m_2} E_{6,\ttype}^{(2)} .
\end{gather*}
Similarly, for the relations governing either branch of Eisenstein series $ E_{2k,\ttype}^{(\ell)} $ of weight $2k>4$, differentiate \eqref{eq:E2kxy} and use \eqref{eq:xydbydtau}, then resolve all products of~$ \mathsf{x_{\ttype}} $ and $ \mathsf{y_{\ttype}} $ into Eisenstein series according to~\eqref{eq:E2kxy}.
\end{proof}

Owing to their similarities with relations satisfied by the Eisenstein series associated to $\mathrm{PSL}(2,\mathbb{Z}) $, these identities will henceforth be referred to as the Ramanujan identities associated to the triangle group~$ \ttype $.

\section{Maier equations for triangle groups}\label{MaierForTriangles}

The Ramanujan identities are a~set of coupled first order differential equations that involve all the Eisenstein series. However as we have seen, there are non-trivial ring relations that relate the higher weight Eisenstein series. We now show how it is possible to use the ring relations and deduce from the Ramanujan rela\-tions a~higher order non-linear differential equation for~every triangle group. We show that the equation coincides precisely with the one obtained by~Maier~\cite{Maier}. Our derivation has the advantage of being completely elementary without the need for advanced computer algebra or explicit Fourier expansions of the automorphic forms and turns out to follow simply from the existence of the Ramanujan identities and the ring relations.

Before we proceed further we first introduce the notion of a~covariant derivative. In fact, the structure of the Ramanujan identities lends itself to a~natural choice of derivation on ring of automorphic forms. Let the space of weight-$ k $ automorphic forms associated to a~triangle group~$ \ttype $ be denoted~$ \mathfrak{m}_{k} $.\footnote{The dependence on the choice of triangle group is left implicit to avoid clutter.} We define the operator $ \text{D} $ such that
\begin{gather*}
\text{D}\colon\ \mathfrak{m}_{k} \longrightarrow \mathfrak{m}_{k+2},
\end{gather*}
Explicitly, this Ramanujan--Serre derivative is defined as
\begin{gather*}
\text{D} = \frac{1}{2\pi\ii} \frac{\dd}{\dd\tau} - \frac{k}{2} \left(1-\frac{1}{m_1}-\frac{1}{m_2}\right)E_{2,\ttype} .
\end{gather*}
With this definition, the first few Ramanujan identities may be written in~terms of the Rama\-nu\-jan--Serre derivative as
\begin{gather}\label{DE4}
\text{D}E_{4, \ttype} = \frac{2-m_1}{m_1}E_{6, \ttype}^{(1)} + \frac{2-m_2}{m_2} E_{6, \ttype}^{(2)},\\
\label{DE61}
\text{D}E_{6, \ttype}^{(1)} = \frac{3-m_1}{m_1} E_{8, \ttype}^{(1)} + \frac{3-2m_2}{m_2} E_{4,\ttype}^2,\\
\label{DE62}
\text{D}E_{6, \ttype}^{(2)} = \frac{3-m_2}{m_2} E_{8, \ttype}^{(2)} + \frac{3-2m_1}{m_1} E_{4, \ttype}^2.
\end{gather}
We now act with the Serre derivative on \eqref{DE4}, and use equations \eqref{DE61} and \eqref{DE62} along with the ring relation \eqref{eq:notFreelyGenerated}, i.e., we use
\begin{gather*}
E_{4, \ttype}^{(\ell)} E_{8, \ttype}^{(\ell)} = \big(E_{6, \ttype}^{(\ell)}\big)^2 \qquad\text{for}\ \ell=1,2,
\end{gather*}
to obtain the following differential equation relating Eisenstein series' of weights four and six:
\begin{gather}
E_{4, \ttype} \text{D}^2 E_{4, \ttype} = \frac{(3-2m_1)(2-m_2)+(3-2m_2)(2-m_1) }{m_1m_2}E_{4, \ttype}^3 \nonumber
\\ \hphantom{E_{4, \ttype} \text{D}^2 E_{4, \ttype} =}
{}+\frac{(m_1-2)(m_1-3)}{m_1^2} \big(E_{6, \ttype}^{(1)}\big)^2 + \frac{(m_2-2)(m_2-3)}{m_2^2} \big(E_{6, \ttype}^{(2)}\big)^2.\label{E4DDE4eqn}
\end{gather}

Note that each term in~the above equation is automorphic with weight $ 12 $, and consequently that the equation is invariant under triangle group action.

Before turning to the main goal of this section, i.e., deriving the Maier equations for arbitrary triangle groups~\cite{Maier}, we first consider two special cases. These cases are distinguished by simpler Maier equations with lower weight than one might expect, chiefly due to the simplifications that arise when the integers specifying the type~$ \ttype $ of the triangle group are tuned to special values.

\begin{Theorem}[isosceles triangle groups]The Maier equations corresponding to isosceles triangle groups with type $\ttype_M = (M,M,\infty) $ are identical to the Maier equations for the Hecke group~$\mathrm{H}(M) $.
\end{Theorem}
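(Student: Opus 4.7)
My plan is to derive the Maier equation for the isosceles group $\ttype_M$ directly, by following the same recipe used to obtain \eqref{E4DDE4eqn}, and then compare the result with the Maier equation for $\mathrm{H}(M)$ from~\cite{Ashok:2018myo}. After substituting $m_1 = m_2 = M$ into \eqref{E4DDE4eqn}, the coefficients of $\bigl(E_{6,\ttype_M}^{(1)}\bigr)^2$ and $\bigl(E_{6,\ttype_M}^{(2)}\bigr)^2$ collapse to the common value $(M-2)(M-3)/M^2$, so the right-hand side depends on the weight-$6$ Eisenstein series only through the elementary symmetric functions $E_{6,\ttype_M}^{(1)} + E_{6,\ttype_M}^{(2)}$ and $E_{6,\ttype_M}^{(1)} E_{6,\ttype_M}^{(2)}$.

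The second step uses the $k=3$ case of the ring relation \eqref{newringrelation}, namely $E_{6,\ttype_M}^{(1)} E_{6,\ttype_M}^{(2)} = (E_{4,\ttype_M})^3$, together with \eqref{DE4} specialized to $m_1 = m_2 = M$, which reads $E_{6,\ttype_M}^{(1)} + E_{6,\ttype_M}^{(2)} = \frac{M}{2-M}\,\mathrm{D} E_{4,\ttype_M}$. Newton's identity then expresses $\bigl(E_{6,\ttype_M}^{(1)}\bigr)^2 + \bigl(E_{6,\ttype_M}^{(2)}\bigr)^2$ entirely in terms of $E_{4,\ttype_M}$ and $\mathrm{D} E_{4,\ttype_M}$; feeding this back into the specialized \eqref{E4DDE4eqn} and collecting terms produces a closed second-order ODE
\[
E_{4,\ttype_M}\,\mathrm{D}^2 E_{4,\ttype_M} = A(M)\,(E_{4,\ttype_M})^3 + \frac{M-3}{M-2}\,(\mathrm{D} E_{4,\ttype_M})^2
\]
for an explicit rational coefficient $A(M)$. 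The parallel derivation for $\mathrm{H}(M) = (2, M, \infty)$ in~\cite{Ashok:2018myo} is structurally simpler: the $\bigl(E_6^{(1)}\bigr)^2$ term in \eqref{E4DDE4eqn} drops out because $(m_1-2)(m_1-3)/m_1^2 = 0$ when $m_1 = 2$, so \eqref{DE4} alone eliminates $E_6^{(2)}$ and yields an ODE of exactly the same functional form.

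The theorem thus reduces to a comparison of coefficients: the factor $(M-3)/(M-2)$ multiplying $(\mathrm{D} E_4)^2$ is manifestly common to both derivations, so what remains is to match the coefficient $A(M)$ of $(E_4)^3$. The main obstacle I expect is that, \emph{a priori}, $E_{4,\ttype_M}$ and $E_{4,\mathrm{H}(M)}$ are built from Halphen systems with different parameters $(a,b,c)$ and hence are not evidently the same function. To handle this, I would exploit the geometric observation that the $(M,M,\infty)$ triangle is obtained by reflecting a $(2, M, \infty)$ triangle across the geodesic from its right-angle vertex to the cusp, so that $\ttype_M$ sits inside $\mathrm{H}(M)$ as an index-$2$ subgroup; consequently $E_{4,\mathrm{H}(M)}$ is automatically a weight-$4$ holomorphic automorphic form of $\ttype_M$. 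Since that space is one-dimensional, the two Eisenstein series are proportional, and the proportionality constant — fixed by the normalizations at the cusp tabulated in Appendix~\ref{app:Fourier} — is precisely what is required to align the two values of $A(M)$ and render the two Maier equations identical.
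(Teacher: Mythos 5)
Your derivation is correct and is essentially the paper's own proof: specialize \eqref{E4DDE4eqn} to $m_1=m_2=M$, then use the square of \eqref{DE4} together with the ring relation $E_{6,\ttype_M}^{(1)}E_{6,\ttype_M}^{(2)}=E_{4,\ttype_M}^3$ to reduce everything to $E_{4,\ttype_M}$ and its Serre derivatives, obtaining $E_{4}\mathrm{D}^2E_{4}=\tfrac{M-3}{M-2}(\mathrm{D}E_{4})^2+\tfrac{2(M-2)}{M}E_{4}^3$. The only difference is your final step: the paper simply declares the residual mismatch in the $(E_4)^3$ coefficient to be an ``unimportant rescaling'' of $E_{4,\ttype_M}$ (and relegates the bisection/index-two observation to a remark after the proof), whereas you supply the geometric justification for why such a rescaling identifies the two equations --- a legitimate, slightly more careful treatment of the same point.
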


\begin{proof}Equations \eqref{DE4} and \eqref{E4DDE4eqn} simplify considerably for isosceles triangle groups with
\linebreak \mbox{$ m_1 = m_2 = M $}. Using the square of~\eqref{DE4} and the ring relation \eqref{newringrelation}, i.e.,
\begin{gather}\label{E4E6ringrelation}
E_{6, \ttype_{M} }^{(1)}E_{6, \ttype_{M} }^{(2)} = E_{4, \ttype_{M} }^3,
\end{gather}
we can write \eqref{E4DDE4eqn} as
\begin{gather*}
E_{4, \ttype_{M} } \text{D}^2 E_{4, \ttype_{M} } = \frac{M-3}{M-2} (\text{D}E_{4, \ttype_{M} })^2 + \frac{2(M-2)}{M} E_{4, \ttype_{M} }^3.
\end{gather*}
Up to an~unimportant rescaling of~$E_{4, \ttype_{M} }$, this is the same equation that was derived for the Hecke group ${\rm H}(M) $ in~\cite{Ashok:2018myo}.
\end{proof}

This fact was already observed in~\cite{Maier} and we see that it is a~simple consequence of the Ramanujan relations and the ring relations. There exists a~simple geometrical basis for this fact: an~isosceles hyperbolic triangle of type $ (M,M,\infty) $ can be bisected into two hyperbolic triangles of type $ (2,M,\infty) $~\cite{nehari2012conformal}. We now turn to triangles groups of type $ \ttype_{3,M}= (3, M,\infty) $. We present this example because it is instructive and will help build intuition for the most general case.

\begin{Example}[{type $(3,M,\infty)$ triangle groups}]
In this case as well the goal is to write an~equation purely in~terms of the modular covariant derivatives $\text{D}^k E_{4,\ttype_{3,M}}$. Start with \eqref{DE4} and solve for $E_{6, \ttype_{3,M}}^{(1)}$ as
\begin{gather*}
E_{6, \ttype_{3,M}}^{(1)} = -3 \text{D}E_{4, \ttype_{3,M}}-\frac{3(M-2)}{M} E_{6, \ttype_{3,M}}^{(2)}.
\end{gather*}
By combining this with the ring relation \eqref{E4E6ringrelation}, we obtain the relation:
\begin{gather*}
\text{D} E_{4,\ttype_{3,M}} E_{6, \ttype_{3,M}}^{(2)} = -\frac{M-2}{M} \big(E_{6, \ttype_{3,M}}^{(2)}\big)^2- \frac{1}{3} E_{4, \ttype_{3,M}}^3 .
\end{gather*}
One can write an~independent expression for the square of~$E_{6, \ttype_{3,M}}^{(2)}$ in~terms of the $E_{4,\ttype_{3,M}}$ and its modular covariant derivatives courtesy of~\eqref{E4DDE4eqn}:
\begin{gather}\label{E6square}
\big(E_{6, \ttype_{3,M}}^{(2)}\big)^2 = \frac{M^2}{(M-3)(M-2)} E_{4, \ttype_{3,M}} \text{D}^2
E_{4, \ttype_{3,M}} +\frac{M(5M-9)}{3(M-3)(M-2)} (E_{4, \ttype_{3,M}})^3.
\end{gather}
This leads to the following relation:
\begin{gather*}
\text{D} E_{4, \ttype_{3,M}} E_{6, \ttype_{3,M}}^{(2)} = \frac{2}{3} \frac{(2M-3)}{(M-3)}(E_{4, \ttype_{3,M}})^3 -\frac{M}{M-3}E_{4, \ttype_{3,M}} \text{D}^2 E_{4, \ttype_{3,M}}.
\end{gather*}
Squaring both sides of this equation and using the ring relation \eqref{E6square} once again leads to the following equation, with each term of weight 20:
\begin{gather*}
9M^2(M-3) \text{D}^2E_{4, \ttype_{3,M}} (\text{D}E_{4, \ttype_{3,M}})^2 - 9 M^2(M-2) (\text{D}^2E_{4, \ttype_{3,M}})^2 E_{4, \ttype_{3,M}}\\ \qquad
{}+ 3 M(M-3) (9 - 5 M) (\text{D}E_{4, \ttype_{3,M}})^2 (E_{4, \ttype_{3,M}})^2
\\ \qquad
{}+ 12 M (M-2)(2M-3) \text{D}^2E_{4, \ttype_{3,M}} (E_{4, \ttype_{3,M}})^3 -
4(M-2) (2M-3)^2 (E_{4, \ttype_{3,M}})^5 = 0 .
\end{gather*}
\end{Example}

We now bring this strategy to bear on arbitrary triangle groups. This theorem corresponds directly to \cite[Theorem 6.4]{Maier} and we use the same notation for the coefficients for ease of comparison.

\begin{Theorem}For an~arbitrary triangle group of type $\ttype=(m_1,m_2,\infty)$, the weight four automorphic form $E_{4,\ttype}$ satisfies the following weight-$24$ differential equation:
\begin{gather}
C_{88} E_{4, \ttype}^2 \big(\mathrm{D}^2 E_{4, \ttype}\big)^2 + C_{86}E_{4, \ttype}\big(\mathrm{D}E_{4, \ttype}\big)^2\mathrm{D}^2 E_{4, \ttype}+C_{84}E_{4, \ttype}^4\big(\mathrm{D}^2E_{4, \ttype}\big) + C_{66}(\mathrm{D}E_{4, \ttype} )^4\nonumber
\\ \hphantom{C_{88} E_{4, \ttype}^2 \big(\mathrm{D}^2 E_{4, \ttype}\big)^2}
{} + C_{64}E_{4, \ttype}^3(\mathrm{D}E_{4, \ttype} )^2 + C_{44}E_{4, \ttype}^6 = 0, \label{MaierTheorem}
\end{gather}
where the coefficients are given by
\begin{gather}
C_{88} = m_1^2 m_2^2(m_1-2) (m_2-2),\nonumber\\
C_{86} =- m_1^2 m_2^2 ((m_1-2)(m_2-3)+(m_2-2)(m_1-3)),\nonumber\\
C_{84}= -4m_1m_2 (m_1-2) (m_2-2) (m_1m_2 - m_1 -m_2),\nonumber\\
C_{66} =m_1^2 m_2^2(m_1-3) (m_2-3),\nonumber\\
C_{64} = m_1 m_2 \left(3 (m_1-m_2)^2+2 m_2 (m_2-3) (m_1-2)^2+2 m_1 (m_1-3) (m_2-2)^2\right)\!,\nonumber\\
C_{44} = 4 (m_1-2) (m_2-2) (m_1m_2-m_1 -m_2)^2 .\label{MaierCoefficients}
\end{gather}
\end{Theorem}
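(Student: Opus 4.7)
The plan is to generalise the computation carried out above for $\ttype = (3,M,\infty)$. Write $u := E_{6,\ttype}^{(1)}$ and $v := E_{6,\ttype}^{(2)}$, and regard these as algebraic unknowns to be solved for in terms of $E_{4,\ttype}$, $\mathrm{D}E_{4,\ttype}$, and $\mathrm{D}^2 E_{4,\ttype}$; substituting the resulting expressions back into the ring relation $uv = E_{4,\ttype}^3$ will yield the desired identity. Three algebraic relations binding $u$ and $v$ are available: first, the Ramanujan identity \eqref{DE4}, which is linear, $\alpha u + \beta v = \mathrm{D}E_{4,\ttype}$, with $\alpha = (2-m_1)/m_1$ and $\beta = (2-m_2)/m_2$; second, the ring relation \eqref{newringrelation} at $k=3$, namely $uv = E_{4,\ttype}^3$; and third, equation \eqref{E4DDE4eqn}, which we rewrite as $\gamma u^2 + \delta v^2 = E_{4,\ttype}\,\mathrm{D}^2 E_{4,\ttype} - \kappa E_{4,\ttype}^3$ with $\gamma = (m_1-2)(m_1-3)/m_1^2$, $\delta = (m_2-2)(m_2-3)/m_2^2$, and $\kappa$ the cubic coefficient read off from \eqref{E4DDE4eqn}.

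The first step is to combine the linear relation with the ring relation: eliminating $v$ via $v = (\mathrm{D}E_{4,\ttype} - \alpha u)/\beta$ and substituting into $uv = E_{4,\ttype}^3$ produces the single-variable quadratic $\alpha u^2 - (\mathrm{D}E_{4,\ttype})\,u + \beta E_{4,\ttype}^3 = 0$, and symmetrically $\beta v^2 - (\mathrm{D}E_{4,\ttype})\,v + \alpha E_{4,\ttype}^3 = 0$. Solving each for the square gives $u^2 = \alpha^{-1}\bigl(u\,\mathrm{D}E_{4,\ttype} - \beta E_{4,\ttype}^3\bigr)$ and likewise for $v^2$. Substituting these expressions into \eqref{E4DDE4eqn} \emph{linearises} the quadratic constraint in $u$ and $v$, producing a second linear relation of the form $Au + Bv = C$, in which $A$ and $B$ are proportional to $\mathrm{D}E_{4,\ttype}$ and $C$ is a polynomial in $E_{4,\ttype}$ and $\mathrm{D}^2 E_{4,\ttype}$.

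The two linear equations then form a $2\times 2$ system in $u$ and $v$ over the field of rational functions in $E_{4,\ttype}$ and its covariant derivatives. Cramer's rule yields explicit rational expressions for $u$ and $v$, and substituting these into $uv = E_{4,\ttype}^3$ and clearing denominators produces a polynomial identity purely in $E_{4,\ttype}, \mathrm{D}E_{4,\ttype}, \mathrm{D}^2 E_{4,\ttype}$. Since $\mathrm{D}$ raises the automorphic weight by two, every surviving monomial must have weight exactly $24$, which already pins down the list of admissible monomials to the six displayed in \eqref{MaierTheorem}.

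The main obstacle is purely book-keeping: one must verify that the rational coefficients delivered by Cramer's rule reproduce \eqref{MaierCoefficients} after multiplying through by an appropriate power of $m_1 m_2$ to clear the fractions implicit in $\alpha,\beta,\gamma,\delta,\kappa$. Several cancellations take place at this stage --- for instance the denominator $\alpha^2\delta - \beta^2\gamma$ from the Cramer determinant gets squared and then partially cancelled against numerator factors --- and the specialisations to isosceles triangle groups ($m_1 = m_2$) and to $\ttype = (3,M,\infty)$ treated above furnish useful independent consistency checks on the coefficients $C_{88},\dots,C_{44}$.
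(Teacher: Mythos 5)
Your proposal follows essentially the same route as the paper: both use the linear relation \eqref{DE4}, the ring relation $E_{6,\ttype}^{(1)}E_{6,\ttype}^{(2)}=E_{4,\ttype}^{3}$, and the weight-$12$ relation \eqref{E4DDE4eqn} to eliminate the two weight-$6$ Eisenstein series by resultant-style manipulations, the only difference being that you solve a $2\times 2$ linear system by Cramer's rule where the paper sequentially derives \eqref{DE4E6} and \eqref{E6sqgeneral} and then squares --- an algebraically equivalent elimination whose determinant (proportional to $m_1-m_2$, like the paper's own denominators) likewise presupposes the non-isosceles case treated separately beforehand. One cosmetic slip: weight counting alone admits a seventh monomial, $\big(\mathrm{D}^{2}E_{4,\ttype}\big)^{3}$, which simply happens not to appear in the final equation.
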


\begin{proof}First, solve for $E_{6, \ttype}^{(1)}$ using \eqref{DE4}:
\begin{gather*}
E_{6, \ttype}^{(1)} = -\frac{\text{D}E_{4, \ttype} m_1}{m_1-2}-\frac{m_1\left(m_2-2\right)}{m_2\left(m_1-2\right)}E_{6, \ttype}^{(2)}.
\end{gather*}
Then substitute this into the ring relation \eqref{E4E6ringrelation} to obtain the following equation:
\begin{gather}\label{RR1weight12}
\frac{m_2(m_1-2)}{m_1 (m_2-2)}E_{4, \ttype}^3 +\frac{m_2}{m_2-2}E_{6, \ttype}^{(2)} \text{D}E_{4, \ttype} +\big(E_{6, \ttype}^{(2)}\big)^2 = 0.
\end{gather}
A second independent weight-12 equation is arrived at by acting with $E_{4, \ttype}\text{D}$ on \eqref{DE4} and in~the resulting equation, we substitute the equations \eqref{DE61} and \eqref{DE62}. After this, on taking suitable linear combinations we eliminate $\big(E_{6, \ttype}^{(2)}\big)^2$ and obtain the following equation:
\begin{gather}
\text{D}E_{4, \ttype} E_{6, \ttype}^{(2)} = -\frac{(m_1-2) m_2}{m_1-m_2}E_{4, \ttype} \text{D}^2E_{4, \ttype} +\frac{(m_1-3) m_2}{m_1-m_2}(\text{D}E_{4, \ttype})^2\nonumber
\\
\hphantom{\text{D}E_{4, \ttype} E_{6, \ttype}^{(2)} =}
{}+\frac{2(m_1-2) (m_1 m_2 - m_1 - m_2)}{m_1 (m_1-m_2)}E_{4, \ttype}^3 .
\label{DE4E6}
\end{gather}
We substitute this into \eqref{RR1weight12} to obtain the ring relation for $\big(E_{6, \ttype}^{(2)}\big)^2$:
\begin{gather}
\big(E_{6, \ttype}^{(2)}\big)^2 = -\frac{ (m_1-2) m_2^2}{(m_1-m_2) (m_2-2)}E_{4, \ttype}
\text{D}^2E_{4, \ttype}+\frac{ (m_1-3) m_2^2}{(m_1-m_2) (m_2-2)}\text{D}E_{4, \ttype}^2\nonumber
\\ \hphantom{\big(E_{6, \ttype}^{(2)}\big)^2 =}
{}+\frac{ m_2 (m_1-2) (2m_1 m_2 - 3m_1-m_2)}{m_1 (m_1-m_2) (m_2-2)}E_{4, \ttype}^3.
\label{E6sqgeneral}
\end{gather}
We now have all the ingredients in~place to derive the Maier equation for the general triangle group. Consider the square of~\eqref{DE4E6} and substitute the expression for the square of~$E_{6, \ttype}^{(2)}$ derived in~\eqref{E6sqgeneral} above. This leads to the weight-$ 24 $ equation written purely in~terms of~$E_{4, \ttype}$ and its modular covariant derivatives, given in~\eqref{MaierTheorem}, with the coefficients given in~\eqref{MaierCoefficients}. Finally, that the orbit of~$ E_{4,\ttype} $ under the corresponding triangle group also solves the nonlinear differential equation under consideration follows from its manifest covariance.
\end{proof}

Up to a~normalization factor, this precisely matches the Maier equation for the triangle group of type $(m_1, m_2, \infty)$ that is presented in~\cite[Theorem~6.4]{Maier}. We see that this differential equation is a~direct consequence of the Ramanujan identities satisfied by the Eisenstein series in~combination with the algebraic ring relations that bind them.

\section{Painlev\'e analysis}
We now turn to a~study of the nonlinear differential equations that we introduced in~the previous section. In particular, we will show that these equations satisfy the Painlev\'e property, i.e., that the only movable singularities of these differential equations are poles. For a~brief review of the stability of differential equations and the Painlev\'e property, we refer the reader to~\cite{Ashok:2018myo}.

\subsection{Chazy equations}
In this section, we outline a~process of differential elimination that may be used to derive higher-order nonlinear differential equations satisfied by $E_{2,\ttype}$. These differential equations are of order~$m_2$, and are natural analogues of the higher-order Chazy equations studied in~\cite{Ashok:2018myo}.\footnote{We remind the reader that without loss of generality, we choose the integers that specify the type $\ttype$ of the triangle group such that $m_1 \leq m_2 $.}

The process of differential elimination is complicated by presence of two distinct branches of Eisenstein series. In order to sidestep this difficulty, we effect a~change of basis in~the space of automorphic forms that is ``diagonal'' in~the sense that only specific linear combinations of the two branches of Eisenstein series appear together in the Ramanujan identities. For tri\-angle group~$ \ttype $ and $ k \geq 3 $, define the following combinations of the two branches of Eisenstein series:
\begin{gather}\label{eq:G2k}
G_{2k,\ttype} = \prod_{p=2}^{k-1} \left(\frac{m_1 - p}{m_1}\right) E_{2k,\ttype}^{(1)} + \prod_{p=2}^{k-1} \left(\frac{m_2 - p}{m_2}\right) E_{2k,\ttype}^{(2)}.
\end{gather}
Since the Eisenstein series' of weights two and four are unique we identify $ G_{4,\ttype} = E_{4,\ttype} $ and $ G_{2,\ttype} = E_{2,\ttype} $ for uniformity of notation. Once this change of basis is effected, we may proceed with straightforward differential elimination. Note that the $ G_{2k,\ttype} $ are not algebraically independent. This is once again a~consequence of the ring relations. The utility of this change of basis is presented in~the following simple example.

\begin{Example}[{type $(4,4,\infty)$ triangle group}]
In terms of the linear combinations defined in~\eqref{eq:G2k}, the Ramanujan identities~\eqref{eq:RamanujanIdGeneral} associated to the triangle group $ (4,4,\infty) $ take the following form:
\begin{gather*}
G_{2, \mathfrak{t}}^{\prime}=\frac{1}{4}\big(G_{2, \mathfrak{t}}^{2}-G_{4, \mathfrak{t}}\big), \\
G_{4, \mathfrak{t}}^{\prime}=  G_{2, \mathfrak{t}} G_{4, \mathfrak{t}}-G_{6, \mathfrak{t}}, \\
G_{6, \mathfrak{t}}^{\prime}=\frac{1}{4}\big(6 G_{2, \mathfrak{t}} G_{6, \mathfrak{t}}-5 G_{4, \mathfrak{t}}^{2}-4 G_{8, \mathfrak{t}}\big), \\
G_{8, \mathfrak{t}}^{\prime}=\frac{1}{2} ( 4 G_{2, \mathfrak{t}} G_{8, \mathfrak{t}}- G_{4,\mathfrak{t}}G_{6,\mathfrak{t}}).
\end{gather*}
\end{Example}

Using a~procedure outlined in~\cite[Section~4]{Ashok:2018myo}, we may use these equations to arrive at a~differential equation $C_{(4,4)}$ satisfied by $ y = G_{2,\ttype}(\tau)$:
\begin{gather}\label{34chazy}
C_{(4,4)}\colon\ y^{(4)}-5 y y^{(3)}-9 y y'^2+(6 y'+6 y^2) y'' = 0.
\end{gather}

\begin{Remark}
The differential equation \eqref{34chazy}, as we will see shortly, has Bureau symbol P$ 1 $, meaning the differential equation admits solutions having a~Laurent expansion with leading divergence
${\sim}(\tau-\tau_{0})^{-1}$. This equation is known to possess the Painlev\'e property, and appears in~\cite[equation~(2.3)]{Cosgrove2006}.
\end{Remark}
These differential equations are natural analogues of the Chazy equation associated to the quasimodular form of~${\rm PSL}(2,\mathbb{Z}) $~\cite{CLARKSON1996225, Takhtajan:1992qb}, and also the higher-order Chazy eqautions associated to Hecke groups~\cite{Ashok:2018myo}. The above systematic procedure for constructing higher-order nonlinear differential equations motivates the following proposition.

\begin{Proposition}
Each triangle group is associated to a~nonlinear differential equation of order~$m_2$, which we shall denote by $C_{\ttype}$. Further, each term in~this differential equation is of order $2m_2 + 2$.
\end{Proposition}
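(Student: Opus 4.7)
The plan is to carry out differential elimination in the $G$-basis defined by \eqref{eq:G2k}, starting from the Ramanujan identities \eqref{eq:RamanujanIdGeneral}. The first step is to establish the telescoping recursion
\begin{gather*}
G_{2k,\ttype}' = \frac{k(m_1m_2-m_1-m_2)}{m_1m_2}\, y\, G_{2k,\ttype} + \mathcal{R}_{2k,\ttype} - G_{2k+2,\ttype},
\end{gather*}
where $y := G_{2,\ttype}$ and $\mathcal{R}_{2k,\ttype}$ collects the remaining $E_{4,\ttype}\,E_{2k-2}^{(\ell)}$ contributions. The clean appearance of $-G_{2k+2,\ttype}$ on the right is forced by the branchwise telescoping $\prod_{p=2}^{k-1}\frac{m_\ell - p}{m_\ell}\cdot \frac{m_\ell - k}{m_\ell}=\prod_{p=2}^{k}\frac{m_\ell - p}{m_\ell}$ for $\ell=1,2$, which matches the defining coefficients of $E_{2k+2}^{(\ell)}$ inside $G_{2k+2,\ttype}$.

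The second step is to observe that $G_{2(m_2+1),\ttype} \equiv 0$: the branch-two prefactor contains the vanishing factor $(m_2-m_2)/m_2$, and the branch-one prefactor contains $(m_1-m_1)/m_1$ since $m_1 \leq m_2$ places $p = m_1$ within the product range $[2, m_2]$. Consequently the chain $y \to G_{4,\ttype} \to \cdots \to G_{2m_2,\ttype}$ closes at step $m_2$: the terminal Ramanujan identity no longer introduces a new $G$, and its right-hand side is composed entirely of quantities already in play.

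The third step is to invert the recursion. Using the first Ramanujan identity, write $G_{4,\ttype}$ in terms of $y$ and $y'$; then inductively solve for each $G_{2k,\ttype}$ as a weight-$2k$ polynomial in $y, y', \ldots, y^{(k-1)}$, with the ring relations \eqref{eq:notFreelyGenerated} and \eqref{newringrelation} used to reduce the branch-split products appearing in $\mathcal{R}_{2k,\ttype}$ back into the $G$-basis. Substituting these expressions into the terminal relation at $k = m_2$ yields a polynomial ODE in $y, y', \ldots, y^{(m_2)}$, which is $C_{\ttype}$. Weight-homogeneity is then automatic: every Ramanujan identity and every ring relation is weight-graded, and $y^{(k)}$ carries quasi-automorphic weight $2k+2$, so every term in $C_{\ttype}$ sits at weight $2m_2 + 2$.

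The main obstacle is the combinatorial bookkeeping when $m_1 < m_2$: the branch forms $E_{2k}^{(1)}$ and $E_{2k}^{(2)}$ are bound only by the quadratic ring relation \eqref{newringrelation}, so a single combination $G_{2k,\ttype}$ does not always determine them. One must verify carefully that systematic application of \eqref{eq:notFreelyGenerated} and \eqref{newringrelation} reduces every intermediate cross-branch product to the $G$-basis, and that the resulting equation is irreducible of order exactly $m_2$ rather than a spuriously higher order arising from an incomplete elimination.
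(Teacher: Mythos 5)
Your proposal takes the same route as the paper --- rewriting the Ramanujan identities in the $G_{2k,\ttype}$ basis of \eqref{eq:G2k} and performing differential elimination --- but supplies considerably more detail than the paper's own two-sentence proof, which simply asserts that one obtains $m_2$ first-order equations in $m_2$ variables; your telescoping identity for the $-G_{2k+2,\ttype}$ term and the vanishing of $G_{2(m_2+1),\ttype}$ (which terminates the chain) are both correct and are exactly the mechanism the paper leaves implicit. The closure issue you flag for $m_1 < m_2$ --- that $\mathcal{R}_{2k,\ttype}$ is a linear combination of $E_{2k-2,\ttype}^{(1)}$ and $E_{2k-2,\ttype}^{(2)}$ with coefficients generally \emph{not} proportional to those defining $G_{2k-2,\ttype}$, so that reducing it to the $G$-basis via the ring relations requires a separate verification --- is a genuine subtlety, but it is one the paper's proof does not address either, so your write-up is, if anything, more candid about where the work lies.
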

\begin{proof}\looseness=1 The Ramanujan identities in~terms of the combinations $G_{2k,\ttype}$ are a~system of~$m_2$ first order differential equations in~$m_2$ variables. The proposition follows by differential elimination.
\end{proof}

\subsection{Painlev\'e analysis}
We now briefly recall the main steps of the generalized Painlev\'e analysis, following the work of~\cite{Conte:1993pp}. Consider a~differential equation represented by
\begin{gather}\label{Maiercompact}
M[y] = 0 .
\end{gather}
We first propose a~local solution $y_{c}$ (around a~simple pole singularity) to the nonlinear differential equation in~Frobenius form. One then linearises the equation around the solution
\begin{gather}
\label{linearweqn}
\frac{\dd}{\dd\epsilon}M [y_c + \epsilon w]\Big|_{\epsilon=0} = 0 ,
\end{gather}
and look for solutions $w = w(\tau)$ also in~the Frobenius form. In this case, as in~the original Chazy equation and in~the higher order generalizations in~\cite{Ashok:2018myo}, we find that the linearised solution has poles of higher order than the originally proposed solution to the differential equation. At first glance these so-called ``negative resonances'' appear to make the leading order solution unstable to linear perturbations. However, as shown in~\cite{Conte:1993pp, Fordy:1991nd}, it is possible to perform the stability analysis in~such a~situation by writing the solution as a~functional Frobenius series:
\begin{gather*}
y = \sum_{i=0}^{\infty}y_i \chi^{i-a}.
\end{gather*}
Here, $\chi = \chi(\tau)$ satisfies the Riccati equation of the form (see~\cite{Ashok:2018myo} for more details):
\begin{gather*}
\chi' = 1-\frac{S}{2}\chi^2,
\end{gather*}
and $ y_i $ are the undetermined coefficients in~the Frobenius ansatz. We find that leading order ansatz satisfies the differential equation provided $a$ is an integer solution to an indicial equation.\footnote{Integrality of solutions to an~indicial equation is one of the criteria laid out in~\cite{Conte:1993pp}, and ensures single-valuedness of solutions.} Once a~particular branch of the indicial equation is chosen, we can solve for the coefficients $ y_i $. We then propose the following ansatz for the next to leading order solution:
\begin{gather*}
w = \sum_{i=0}^{\infty} w_i \chi^{i-b} .
\end{gather*}
We then substitute this into the linearized equation in~\eqref{linearweqn} and solve for $b$. The phenomenon of ``negative resonances'' is when $ b $ is a~positive integer larger than $ a~$. This appears to destabilize the leading order solution as it appears to have a~more singular behaviour near $\chi=0$. However on setting the coefficients of the leading order solution to be the $ y_i $ that we have solved for, we find that the coefficients of the higher order negative resonance vanishes and this implies that the negative resonance does not destabilize the solution of the differential equation. Thus, following the methods of~\cite{Conte:1993pp, Fordy:1991nd} we can claim that the Chazy equation for the triangle groups of type $ (M,M,\infty) $ satisfies the Painlev\'e property and this leads us to the following proposition.

\begin{Proposition}\label{prop42}
For a~triangle group of type $\ttype=(M,M,\infty)$, the associated Chazy equation~$C_{\ttype}$ possesses the Painlev\'e property. The Eisenstein series $y=E_{2,\ttype}(\tau)$ that satisfies both the differential equations has the following leading order expansion:
\begin{gather*}
y = \frac{M}{M-2}\left(-\frac{2}{\chi} + \frac{S}{3} \chi + \frac{2S^2}{45}\chi^3 + \frac{11S^3}{945}\chi^5 +\cdots \right)\!.
\end{gather*}
The indicial equation for the resonances has the roots $ b\in \{2, \dots, m+1\}$.
\end{Proposition}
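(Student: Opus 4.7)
The plan is to apply the generalized Painlev\'e test of Conte--Fordy--Pickering, as summarised in the preceding discussion and carried out for analogous equations in~\cite{Ashok:2018myo, Conte:1993pp, Fordy:1991nd}, to the Chazy equation $C_{\ttype}$ for $\ttype=(M,M,\infty)$. The crucial simplification is Theorem~3.1: up to rescaling of $E_{4,\ttype}$, and hence of $E_{2,\ttype}$, the equation $C_{(M,M,\infty)}$ coincides with the Chazy equation for the Hecke group $\mathrm{H}(M)$, so the algebraic skeleton of the analysis is inherited from~\cite{Ashok:2018myo}. What remains is to fix the normalisation dictated by Theorem~\ref{theorem22} and then to verify the compatibility conditions at the resonances.

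The first step is to determine the leading Laurent behaviour about a movable singularity. Inserting $y \sim y_0\,\chi^{-a}$ with $\chi' = 1-\tfrac{S}{2}\chi^2$ into $C_{\ttype}$ and balancing the most singular terms forces $a=1$ and fixes $y_0 = -\tfrac{2M}{M-2}$, reproducing the leading coefficient stated in the proposition. Writing $y = \tfrac{M}{M-2}\sum_{i\ge 0} c_i\,\chi^{2i-1}$ and substituting into $C_{\ttype}$, one determines the $c_i$ by a finite triangular recursion, yielding in order the coefficients $c_1 = \tfrac{S}{3}$, $c_2 = \tfrac{2S^2}{45}$, $c_3 = \tfrac{11S^3}{945}$, and so on. That only odd powers of $\chi$ appear can be justified a~posteriori from the structure of the recursion, but is already apparent from the form of the expansion in the statement.

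The second step is the linear stability analysis. Linearising $C_{\ttype}$ about $y_c$ according to~\eqref{linearweqn} and substituting the Frobenius ansatz $w = \sum w_i\,\chi^{i-b}$, the coefficient of the most singular power of $\chi$ yields the indicial polynomial. Since $C_{\ttype}$ is of order $M$ and $y_c$ has a simple pole, this polynomial is of degree $M$, and the task is to show that it factors so that its roots are precisely $b\in\{2,3,\dots,M+1\}$. I would carry this out either by direct expansion or by transporting the Hecke-group factorisation from~\cite{Ashok:2018myo} through the rescaling of Theorem~3.1. Because every root satisfies $b>a=1$, every resonance is negative in the paper's sense.

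The hard part is then to show that these negative resonances do not destabilise the leading solution. Following~\cite{Conte:1993pp, Fordy:1991nd}, for each $b\in\{2,\dots,M+1\}$ one must verify that when the $c_i$ computed above are substituted into the coefficient of $w_b$ in the linearised equation, that coefficient vanishes identically. I would extract these coefficients order by order, using the Ramanujan identities of the Lemma and the ring relation~\eqref{newringrelation} to simplify the algebra; since the underlying cancellations are exactly those of the Hecke Chazy analysis in~\cite{Ashok:2018myo}, the verification reduces to tracking how the rescaling implied by Theorem~3.1 interacts with each compatibility condition. Once all $M$ conditions are met, the functional Frobenius series produces a single-valued Laurent expansion about every movable singularity, which is precisely the Painlev\'e property.
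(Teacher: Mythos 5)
Your proposal follows essentially the same route as the paper: the Conte--Fordy--Pickering perturbative Painlev\'e test with a functional Frobenius series in $\chi$ satisfying the Riccati equation, a leading-order balance fixing $a=1$ and $y_0=-2M/(M-2)$, a degree-$M$ indicial equation with roots $b\in\{2,\dots,M+1\}$, and verification that the compatibility conditions at these negative resonances are satisfied by the coefficients $y_i$ of the leading-order expansion. One caveat: you invoke Theorem~3.1 to identify $C_{(M,M,\infty)}$ with the Hecke Chazy equation, but that theorem concerns the Maier equation satisfied by $E_{4,\ttype}$, not the order-$M$ Chazy equation obtained by differential elimination on the full $G_{2k,\ttype}$ system; the coincidence of the two Chazy equations up to rescaling is plausible (compare $C_{(3,3)}$ in Appendix~\ref{app:Chazy} with the classical Chazy equation of type $(2,3,\infty)$) but does not follow from Theorem~3.1 and would need its own justification, so your fallback of direct expansion is the safer route and is in effect what the paper does.
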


This extends the results obtained for the Hecke groups in~\cite{Ashok:2018myo} to all isosceles triangle groups with a~cusp. Before concluding, we briefly discuss the status of the Chazy equations for triangle groups of type $ (m_1,m_2,\infty) $ with $ m_1 \neq m_2 $, and the Maier equation presented in Theorem~\ref{MaierTheorem}. In the former case, some of the resonance numbers $ b $ are negative rational numbers. We have observed that triangle groups of type $ (m,m+k,\infty) $ have $ k $ such negative rational resonances. For the Maier equation, the analysis is more intricate~\cite{pickering2003painleve}.\footnote{On revisiting our previous work, it appears that the our analysis of the Maier equation \cite[equation~(4.5)]{Ashok:2018myo} using the methods of Conte--Fordy--Pickering~\cite{Conte:1993pp} are insufficient to prove the Painlev\'e property. The conclusions of \cite[Proposition~5.1]{Ashok:2018myo} pertaining to the higher-order Chazy equations $ C_m $ \cite[Proposition~4.1]{Ashok:2018myo}, however, remain unchanged. A more comprehensive analysis of the Maier equation is in~preparation.} In both these cases, the criteria of~\cite{Conte:1993pp,Fordy:1991nd} that test for the Painlev\'e property are not straightforwardly applicable. Work on extending these techniques is currently underway.

\appendix

\section{Generalized Halphen system}\label{app:Fourier}

In this section we collect a~few results about the Halphen system and the Fourier expansions of the automorphic forms $E_{2k,\ttype}$ associated to triangle group of type $\ttype = (m_1, m_2, \infty)$. In terms of the parameters $ (a,b,c) $ introduced in~\eqref{eq:abcHalphen}, the explicit solution of the Halphen system in~\eqref{eq:GenHalphen} is given by (see \cite[Theorem 3]{Doran:2013npa}):
\begin{gather}
t_{1,\ttype}(\tau) =\frac{1}{\alpha_\ttype} (a-1) z Q(a, b; z)\, {}_2F_1(1-a, b;1 ;z ) {}_2F_1(2-a, b;2 ; z),\nonumber\\
t_{2,\ttype}(\tau) - t_{1,\ttype}(\tau) = \frac{1}{\alpha_\ttype} Q(a, b; z)\, {}_2F_1(1-a, b;1 ;z )^2,\nonumber\\
t_{3,\ttype}(\tau) -t_{1,\ttype}(\tau) = \frac{1}{\alpha_\ttype} z Q(a, b; z)\, {}_2F_1(1-a, b;1 ;z )^2.\label{eq:HalphenGenSol}
\end{gather}
The function ${}_2F_1$ is the Gauss hypergeometric function while the function $Q(a ,b; z)$ is given by
\begin{gather*}
Q(a, b; z) = \frac{\ii \pi (1-b)}{2\sin (\pi b) \sin (\pi a)} (1-z)^{b-a} .
\end{gather*}
The parameter $z$ is related to the standard hauptmodul $J_{\ttype}$ of the triangle group:
\begin{gather*}
z = \frac{1}{1-J_{\ttype}} .
\end{gather*}
The solutions in~\cite{Doran:2013npa} differ from those in~\eqref{eq:HalphenGenSol} by an~overall constant $\alpha_\ttype$ that we are free to choose. We choose this coefficient such that the expansion for $t_{2,\ttype}$ begins with unit coefficient; this uniquely fixes the coefficient $\alpha_\ttype$ to be
\begin{gather*}
\alpha_{\ttype} = -\frac{m_1+m_2+m_1m_2}{8m_1m_2}\csc\left[\frac{\pi}{2}\left(1-\frac{1}{m_1}+\frac{1}{m_2}\right)\right] \sec\left[\frac{\pi(m_1+m_2)}{2m_1m_2} \right]\! .
\end{gather*}
With this choice of normalisation, the quasi-automorphic weight-$ 2 $ Eisenstein series $ E_{2,\ttype} $ has the following behavior as $ \tau \rightarrow \ii\infty $:
\begin{gather*}
E_{2,\ttype}(\tau) = 1 + O(q).
\end{gather*}

\section{Chazy equations}\label{app:Chazy}
In this appendix we give examples of nonlinear differential equations naturally associated to tri\-angle groups and solved by their quasiautomorphic weight-$ 2 $ Eisenstein series. These dif\-fe\-ren\-tial equations were discussed in~Proposition~\ref{prop42} and are arrived at by differential elimination on the system of equations governing the $ G_{2k,\ttype} $ defined in~\eqref{eq:G2k}. For a~triangle group of type $ \ttype = (m_1, m_2, \infty) $, they are denoted $ C_{\ttype} \equiv C_{(m_1,m_2)} $. The following equations all possess the Painlev\'e property:
\begin{gather*}
C_{(3,3)}\colon\ y^{(3)}-2 y y''+3 y'^2 = 0,
\\
C_{(4,4)}\colon\ y^{(4)}-5 y y^{(3)}-9 y y'^2+\big(6 y'+6 y^2\big) y'' = 0,
\\
C_{(5,5)}\colon\ 75 y^{(5)}-675 y y^{(4)}+520 y''^2-366 y'^3+2727 y^2 y'^2
\\ \phantom{C_{(5,5)}\colon\ }
{}+y^{(3)} \big(220 y'+1959 y^2\big)+\big({-}2664 y y'-1818 y^3\big) y'' = 0,
\\
C_{(6,6)}\colon\ 3 y^{(6)}-42 y^{(5)} y-228 y y''^2-552 y^3 y'^2+228 y y'^3+y^{(4)} \big(214 y^2-12 y'\big)
\\ \phantom{C_{(6,6)}\colon\ }
{}+\big(600 y^2 y'-114 y'^2+368 y^4\big) y''+y^{(3)} \big(57 y''-34 y y'-468 y^3\big) = 0.
\end{gather*}

\subsection*{Acknowledgments}
We thank Robert Conte for helpful correspondence, and Hossein Movasati for valuable discussions. We would also like to thank the anonymous referees for valuable comments and feedback. DPJ and MR are grateful to IMSc, Chennai for hospitality. MR acknowledges support from the Infosys Endowment for Research into the Quantum Structure of Spacetime. DPJ acknowledges support from SERB grant CRG/2018/002835.

\providecommand{\eprint}[2][]{\href{http://arxiv.org/abs/#2}{arXiv:#2}}

\pdfbookmark[1]{References}{ref}
\LastPageEnding
\end{document}